\numberwithin{equation}{section}
\newcommand{\be}{\begin{equation}}
\newcommand{\ee}{\end{equation}}
\newcommand{\ds}{\displaystyle}
\renewcommand{\ss}{\scriptstyle}
\DeclareMathOperator{\sgn}{sgn}
\DeclareMathOperator{\pf}{Pf}
\DeclareMathOperator{\rev}{rev}
\newcommand{\cO}{\mathcal O}
\newcommand*\pFq[6][8]{%
  \begingroup 
  \pFqmuskip=#1mu\relax
  \mathcode`\,=\string"8000
  \begingroup\lccode`\~=`\,
  \lowercase{\endgroup\let~}\pFqcomma
  {}_{#2}F_{#3}{\left[\genfrac..{0pt}{}{#4}{#5};#6\right]}%
  \endgroup
}
\newcommand{\pFqcomma}{\mskip\pFqmuskip}
\newtheorem{thm}{Theorem}[section]
\newtheorem{cor}[thm]{Corollary}
\newtheorem{lem}[thm]{Lemma}
\newtheorem{defn}[thm]{Definition}
\newtheorem{rem}[thm]{Remark}
\newtheorem{exmp}[thm]{Example}
\def\Ddots{\mathinner{\mkern1mu\raise\p@
\vbox{\kern7\p@\hbox{.}}\mkern2mu
\raise4\p@\hbox{.}\mkern2mu\raise7\p@\hbox{.}\mkern1mu}}
\title[Current Loops and Magnetic Monopoles]{A Statistical Model of Current Loops and Magnetic Monopoles}
\author{Arvind Ayyer}
\address{Department of Mathematics, Indian Institute of Science, Bangalore  560012, India.}
\email{arvind@math.iisc.ernet.in}
\date{\today}
\begin{document}

\begin{abstract}
We formulate a natural model of loops and isolated vertices for arbitrary planar graphs, which we call the monopole-dimer model. We show that the partition function of this model can be expressed as a determinant. We then extend the method of Kasteleyn and Temperley-Fisher to calculate the partition function exactly in the case of rectangular grids. This partition function turns out to be a square of a polynomial with positive integer coefficients when the grid lengths are even. Finally, we analyse this formula in the infinite volume limit and show that the local monopole density, free energy and entropy can be expressed in terms of well-known elliptic functions.
Our technique is a novel determinantal formula for the partition function of a  model of isolated vertices and loops for arbitrary graphs. 
\end{abstract}

\maketitle

\section{Introduction} 

The dimer model on a planar graph $G$ is a statistical mechanical
model which idealises the adsorption of diatomic molecules on $G$. The
associated combinatorial problem is the weighted enumeration of all
dimer covers of $G$, also known as perfect matchings or 1-factors. 
This problem was
solved in a beautiful and explicit way by Kasteleyn
\cite{kasteleyn1961,kasteleyn1963} and by Temperley-Fisher
\cite{temperley1961,fisher1961}.

The monomer-dimer model on the other hand, which idealises the adsorption of both monoatomic as well as diatomic
molecules on $G$, has not had as
much success. In this case, one considers the weighted enumeration of all possible matchings of $G$ with separate fugacities for
both kinds of molecules.  Equivalently, this is the problem of counting all matchings of $G$. There is some indirect evidence that it is not likely to be
exactly solvable \cite{jerrum1987}.
It has been rigorously shown that the monomer-dimer model does not exhibit
phase transitions \cite{gruberkunz1971,heilmann1972}.
The only solutions so far are obtained by
perturbative expansions (see the review in \cite{heilmann1972}, for
example). The asymptotics of the free energy has been studied by various 
authors, see \cite{bondy1966,hammersley1966,hammersley1970} for instance. There have also been several numerical studies \cite{kenyon1996,kong2006a,kong2006b} as well as study of monomer correlations in a sea of dimers \cite{fisher1963}.
We note that there has been some success in solving restricted versions of the classical monomer-dimer model exactly, either for finite size or in the limit of infinite size.
Such is the case for a single monomer on the boundary \cite{tzeng2003,wu2006}, 
arbitrary monomers on the boundary in the scaling limit \cite{priezzhev2008}
and a single monomer in the bulk in the thermodynamic limit \cite{bouttier2007,poghosyan2008}. More recently, after the completion of this work, there has appeared a Grassmannian approach to computing the dimer model partition function with fixed locations of monomers exactly \cite{allegra2014}. On the hexagonal lattice, a lot of work has been done on monomer correlations by Ciucu, see \cite{ciucu2010} and references therein.

We note in passing that signed dimer models 
and signed loop models have gained attention in statistical physics recently, the former in the context of spin liquids \cite{ddr2012} and the latter as an approach towards solving the Ising model \cite{kager2013}.

In this article, we will consider a signed variant of the monomer-dimer model on any planar graph, which we call the monopole-dimer model. This model will turn out to be a natural generalisation of the well-known dimer model\footnote{more precisely, the double-dimer model}, also defined for any planar graph.
The configurations of this model are subgraphs consisting of isolated vertices, doubled edges and oriented loops of even length on the graph such that each vertex is attached to exactly zero or two edges. Each configuration can be thought of as a superposition of two monomer-dimer configurations with the same monomer locations.
The reason for the nomenclature will be explained in Section~\ref{sec:mondim}, when the weights associated to these configurations are specified.
We will prove that the partition function of the monopole-dimer model can be written as a determinant. 
This property is useful from a computational point of view because one can obtain a lot of information about the model using nothing more than basic linear algebra. This approach has been extremely fruitful in studying many models in statistical physics, such as the Ising model in one-dimension \cite{mccoy1973}, the sandpile model \cite{dhar1990} and
the dimer model for planar graphs \cite{kenyon1997}.

We will use this determinant formula to express the partition function of the monopole-dimer model on the two-dimensional grid as a product.
This will turn out to give a natural generalisation of Kasteleyn's and Temperley-Fisher's formula for the dimer model on the rectangular grid. 
It will turn out, for not obvious reasons, that the partition function will be an exact square when the sides of the rectangle are even. This is in contrast to the double-dimer model \cite{kasteleyn1961,fisher1961}, where the partition function is the determinant of an even anti-symmetric matrix, and hence is obviously the square of the corresponding Pfaffian.

We will then derive explicit formulas for the free energy of the monopole-dimer model in terms of known elliptic functions in the infinite size limit and compare it with existing results for the monomer-dimer model, both rigorous and numerical. We will also calculate the entropy and the monopole density.
The starting point, namely the determinant formula, is a consequence of a more general model of oriented loops, doubled edges and vertices
on a general graph, which we will first explain.

The plan of the paper is as follows.
We will first define a new loop-vertex model on arbitrary graphs in Section~\ref{sec:loopvert} and show that the partition function of the model can be written as a determinant in Theorem~\ref{thm:pfloopvertex}. We will then define the monopole-dimer model in Section~\ref{sec:mondim} and use results proved in the previous section to show that its partition function can also be written as a determinant in Theorem~\ref{thm:pfmonopoledimer}. We will then specialise to the two-dimensional grid graph in Section~\ref{sec:mondimgrid} and give an explicit product formula for the partition function in Theorem~\ref{thm:partfngrid}. We finally discuss the asymptotic limit of $\mathbb{Z}^2$ in Section~\ref{sec:asymp}.

The statements of the paper can be verified using the Maple program file 
\texttt{Monopo\-le.maple} available from the author's webpage or as an ancillary file from the \texttt{arXiv} source.

\section*{Acknowledgements}
We would like to acknowledge support in part by a UGC Centre for Advanced Study grant.
We would also like to thank C. Krattenthaler and J. Bouttier for discussions,
T. Amdeberhan for conjecturing \eqref{partfn-grid},
K. Damle and R. Rajesh for suggesting references, and 
M. Krishnapur for many helpful discussions. We also thank the anonymous referees for several useful comments.

\section{A Loop-Vertex Model on General Graphs} 
\label{sec:loopvert}
We begin by defining a model of isolated vertices and loops of even length on arbitrary graphs. The usefulness of the results here is that they are very general, and might be interesting in their own right. At this point, we do not know of any relevant physical situation where this model could be applied. Part of the objective of this section is to make the proof of the determinantal formula for the partition function of the monopole-dimer model simpler. The reader interested in the monopole-dimer model should feel free to skip this section.

Our input data is a simple (not necessarily planar), undirected
vertex- and edge-weighted labelled graph $G = [V,E]$ on $n$ vertices and an arbitrary assignment of arrows along each edge, called the orientation $\cO$ on $G$. We will denote vertex weights by $x(v)$ for $v \in V$ and edge weights as $a(v,v') \equiv a(v',v)$ whenever $(v,v') \in E$.
Any labelled graph comes with a canonical orientation, the one got by directing edges from a lower vertex to a higher one. 

\begin{defn} \label{def:loopvertconf}
A {\bf loop-vertex configuration} $C$ consists of a subgraph of $G$ of edges which form directed loops of even length including doubled edges (to be thought of as loops of length 2), with the property that every vertex belongs to exactly zero or two edges. Let $\mathcal{L}$ be the set of loop-vertex configurations.
\end{defn}

Note that the number of isolated vertices has the same parity as the size of the graph.
We first define the signed weight of a loop in $C$. First, the sign of an edge $(v_1, v_2)$, denoted $\sgn(v_1,v_2)$
is $+1$ if the orientation is from $v_1 \to v_2$ in $\cO$ and $-1$ otherwise. 
Then, given an even oriented loop $\ell = (v_1, \dots, v_{2 n}, v_1)$, the
weight of the loop is
\be \label{defloop}
w(\ell) = -\prod_{j=1}^{2 n} \sgn(v_j,v_{j+1}) \; a(v_j,v_{j+1}),
\ee
with the understanding that $v_{2 n+1} = v_1$. The reason for the
overall minus sign will be clear later. For now, note that the weight of a doubled edge is always $+a(v_1,v_2)^2$.  
Lastly, to each isolated vertex $v$, we associate the weight $x(v)$.  The
weight $w(C)$ of a configuration $C$ is then
\be \label{defloopvertexwt}
w(C) = \prod_{\ell \text{ a loop}} w(\ell) \prod_{\substack{v \text{ an } \\ 
\text{isolated vertex}}} \!\! x(v).
\ee

\begin{defn}
The {\em loop-vertex model} on a vertex- and edge-weighted graph $G$ is the collection 
$\mathcal{L}$ of loop-vertex configurations on $G$ with the weight of each configuration given by
\eqref{defloopvertexwt}.
\end{defn}

\begin{figure}[ht]
\setlength{\unitlength}{1mm}
\begin{center}
\begin{picture}(25,25)
\put(0,0){\circle*{1}}
\put(0,10){\circle*{1}}
\put(10,0){\circle*{1}}
\put(10,10){\circle*{1}}
\put(0,20){\circle*{1}}
\put(10,20){\circle*{1}}
\put(20,0){\circle*{1}}
\put(20,20){\circle*{1}}
\put(20,10){\circle*{1}}
\put(-3,-1){1}
\put(-3,9){2}
\put(-3,19){3}
\put(9,-3){4}
\put(11,11){5}
\put(9,21){6}
\put(19,-3){7}
\put(21,10){8}
\put(19,21){9}
\thicklines
\put(0,0){\vector(1,0){10}}
\put(0,0){\vector(0,1){10}}
\put(0,0){\vector(1,1){10}}
\put(10,0){\vector(0,1){10}}
\put(0,10){\vector(1,0){10}}
\put(0,20){\vector(1,0){10}}
\put(10,20){\vector(1,-1){10}}
\put(0,10){\vector(0,1){10}}
\put(0,20){\vector(1,-1){10}}
\put(10,10){\vector(1,0){10}}
\put(10,10){\vector(1,-1){10}}
\put(20,0){\vector(0,1){10}}
\put(10,0){\vector(1,0){10}}
\put(10,0){\vector(1,1){10}}
\put(0,10){\vector(1,-1){10}}
\put(10,20){\vector(1,0){10}}
\end{picture}
\hfil
\begin{picture}(25,25)
\put(0,0){\circle*{1}}
\put(0,10){\circle*{1}}
\put(10,0){\circle*{1}}
\put(10,10){\circle*{1}}
\put(0,20){\circle*{1}}
\put(10,20){\circle*{1}}
\put(20,0){\circle*{1}}
\put(20,20){\circle*{1}}
\put(20,10){\circle*{1}}
\put(-4,-1){1}
\put(-4,9){2}
\put(-4,19){3}
\put(9,-3){4}
\put(11,11){5}
\put(9,21){6}
\put(19,-3){7}
\put(21,10){8}
\put(19,21){9}
\thicklines
\color{blue}
\put(0,10){\vector(0,1){10}}
\put(0,20){\vector(1,-1){10}}
\put(10,10){\vector(1,-1){10}}
\put(20,0){\vector(0,1){10}}
\put(0,0){\circle{3}}
\put(10,20.5){\vector(1,0){10}}
\color{red}
\put(20,10){\vector(-1,-1){10}}
\put(10,0){\vector(-1,1){10}}
\put(20,19.5){\vector(-1,0){10}}
\end{picture}
\end{center}
\caption{A non-planar graph $G$ with its natural orientation on the left. A particular loop-vertex configuration is given on the right where the ``wrongly'' oriented edges are coloured red.
\label{figure.Gexample}}
\end{figure}
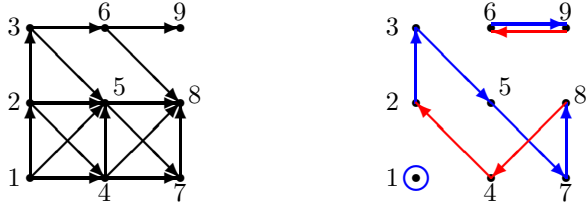

\begin{exmp}
For example, the weight of the configuration in Figure~\ref{figure.Gexample} is 
\[
- x(1) \cdot\; a(6,9)^2  \cdot\;  a(2,3)a(3,5)a(5,7)a(7,8)a(4,8)a(2,4)
\]
\end{exmp}
\noindent
With a slight abuse of terminology, we say that the {\bf (signed) partition function} of the loop-vertex model on the pair $(G,\cO)$ is then
\be \label{defpartfn}
Z_{G,\cO} = \sum_{C \in \mathcal{L}} w(C).
\ee
Whenever the orientation is canonically defined by the labelling on the graph, we will denote the partition function simply as $Z_G$.

\begin{defn}
The {\bf signed adjacency matrix} $K$ associated to the pair $(G,\cO)$ is the matrix $K$ indexed by the vertices of $G$ whose entries are 
\be \label{defkast}
K(v,v') = \begin{cases}
x(v) & v'=v \\
a(v,v') & \text{orientation is from $v$ to $v'$ in $\cO$} \\
-a(v,v') & \text{orientation is from $v'$ to $v$ in $\cO$}.
\end{cases}
\ee
\end{defn}

\begin{thm} \label{thm:pfloopvertex}
The partition function of the loop-vertex model on $(G,\cO)$ is given by
\be \label{partfn-general}
Z_{G,\cO} = \det K.
\ee
\end{thm}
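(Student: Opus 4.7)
The plan is to expand $\det K$ as a signed sum over the symmetric group on $V$ and identify the surviving terms with loop-vertex configurations. First I would write
\[ \det K = \sum_{\sigma \in S_n} \sgn(\sigma) \prod_{v \in V} K(v, \sigma(v)), \]
and decompose each $\sigma$ into disjoint cycles, so that its contribution factors as a product of $x(v)$ over the fixed points of $\sigma$ times one factor per non-trivial cycle, the cycle $(v_1, \dots, v_k)$ contributing $\prod_{j=1}^{k} K(v_j, v_{j+1})$ (indices cyclic). Only permutations whose non-trivial cycles are genuine cycles of $G$ have nonzero contribution.

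The main step will be a sign-reversing involution that cancels all contributions from permutations with any non-trivial odd cycle. I would fix a total order on $V$, and to any such $\sigma$ associate $\sigma'$ obtained by reversing the odd cycle containing the smallest vertex among all odd cycles of $\sigma$. Reversing preserves cycle lengths, so $\sgn(\sigma') = \sgn(\sigma)$; but each off-diagonal factor $K(v_j, v_{j+1})$ is replaced by $K(v_{j+1}, v_j) = -K(v_j, v_{j+1})$, multiplying the contribution by $(-1)^k = -1$ when $k$ is odd. Hence $\sigma$ and $\sigma'$ cancel pairwise, and only permutations whose non-trivial cycles are all even survive.

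The remaining bijection is then routine: the fixed points of $\sigma$ are the isolated vertices of a loop-vertex configuration $C$, and each even cycle of $\sigma$ is a directed loop of $C$ (length-two cycles giving doubled edges, and the two cyclic orientations of a loop of length $\ge 4$ appearing as two distinct permutations, matching the two directed loops on the same edge set). A $2n$-cycle has sign $(-1)^{2n-1} = -1$, supplying exactly the overall minus sign in \eqref{defloop}, while $\prod_j K(v_j, v_{j+1}) = \prod_j \sgn(v_j, v_{j+1}) a(v_j, v_{j+1})$ supplies the remaining factors; for doubled edges, $K(v_1, v_2) K(v_2, v_1) = -a(v_1, v_2)^2$ combined with the transposition sign $-1$ gives $+ a(v_1, v_2)^2$, as required. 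Summing then yields $\det K = \sum_{C \in \mathcal{L}} w(C) = Z_{G, \cO}$. The only conceptually nontrivial step is the involution; everything else is bookkeeping made possible by the antisymmetry of the off-diagonal part of $K$.
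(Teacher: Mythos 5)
Your proof is correct and follows essentially the same route as the paper: Leibniz expansion of $\det K$, cancellation of all non-trivial odd cycles by pairing each permutation with the one obtained by reversing such a cycle (same permutation sign, opposite entry product by antisymmetry), and a sign count showing each surviving even cycle supplies exactly the extra factor $-1$ in \eqref{defloop}. Your explicit sign-reversing involution and per-cycle (rather than global) sign bookkeeping, including the doubled-edge check, are just slightly more careful renderings of the paper's argument.
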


\begin{proof}
We begin by considering the Leibniz formula for the determinant  of $K$. We will 
consider the permutations in $S_n$ according to their cycle decomposition. The
first observation is that the sign of a non-trivial odd cycle
$c=(c_1,c_2,\dots,c_{2l+1},c_1)$ is the opposite of its reverse
$\rev(c)=(c_1,c_{2l+1},\dots,c_2,c_1)$, but the weights are
the same. Therefore, such terms cancel out. The
only odd cycles which appear are cycles of length one, also known as
fixed points.

It is then clear that the terms in the determinant expansion of $K$
are in bijection with loop-vertex configurations of $G$. We now need
to show that the signs are the same.  Therefore we decompose the
permutation $\pi$ into $k$ fixed points and $c$ cycles of lengths
$2m_1,\dots,2m_c$. This ensures that $k$ has the same parity as $n$.

A well-known combinatorial result states that if $n$ is odd
(resp. even), $\pi$ is odd if and only if the number of cycles is even
(resp. odd) in its cycle decomposition. In our case, the number of
cycles is $k+c$. A short tabulation shows that the sign of $\pi$ is
always the same as $(-1)^c$. In other words, the sign of a loop is
precisely the product of all the corresponding terms in $K$ plus one
extra sign. But this is precisely what we have in \eqref{defloop}.
\end{proof}

Although the loop-vertex model consists of signed weights, the following statement can easily be verified since the signed adjacency matrix is a sum of a diagonal matrix and an antisymmetric matrix.

\begin{cor} \label{cor:pf-lvpositive}
The partition function $Z_{G,\cO}$ is a positive polynomial in the variables $x(v)$ for $v \in V$ and $a(v,v')$ for $(v,v') \in E$. In particular, if all the weights are positive reals, $Z_{G,\cO}$ is strictly positive.
\end{cor}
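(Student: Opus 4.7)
The plan is to apply the hint directly: decompose $K = D + A$, where $D = \mathrm{diag}(x(v))_{v \in V}$ is diagonal and $A$ is the antisymmetric matrix whose off-diagonal entries are $\pm a(v,v')$ determined by the orientation $\cO$. Splitting each row of $K$ as a row of $D$ plus a row of $A$ and invoking multilinearity of the determinant in rows yields
\[
\det K \;=\; \sum_{S \subseteq V} \Big(\prod_{v \in S} x(v)\Big) \det A[V \setminus S],
\]
where $A[T]$ denotes the principal submatrix of $A$ on rows and columns indexed by $T$. (For each $v \in S$ the surviving row contribution $x(v)\, e_v$ has a single nonzero entry, in column $v$, so one may simply delete those rows and columns and pull the factor $\prod_{v \in S} x(v)$ out, with no extra sign.)

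Next, each principal submatrix $A[T]$ is itself antisymmetric, so $\det A[T] = 0$ when $|T|$ is odd and $\det A[T] = \pf(A[T])^2 \geq 0$ when $|T|$ is even, with the convention $\det A[\emptyset] = 1$. Hence every summand in the expansion above is a product of the vertex weights $x(v)$ with a square in the edge weights, and is manifestly non-negative whenever the $x$'s and $a$'s are non-negative reals. This exhibits $Z_{G,\cO} = \det K$ as a sum of non-negative quantities on the positive orthant, which is the first assertion of the corollary.

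For the strict positivity statement, note that the extreme summand $S = V$ contributes exactly $\prod_{v \in V} x(v)$, which is strictly positive whenever all $x(v) > 0$. Since every other summand in the expansion is non-negative, the total is strictly positive as well. The proof presents essentially no obstacle: one only needs to carry out the multilinear expansion carefully and recall that the determinant of an antisymmetric matrix of even order factors as the square of its Pfaffian.
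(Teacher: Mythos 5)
Your proof is correct and is precisely the argument the paper has in mind: the paper dismisses this corollary with the one-line remark that $K$ is a diagonal matrix plus an antisymmetric matrix, and your multilinear expansion $\det K = \sum_{S \subseteq V} \bigl(\prod_{v \in S} x(v)\bigr) \det A[V\setminus S]$ combined with $\det A[T] = \pf(A[T])^2$ for $|T|$ even (and $0$ for $|T|$ odd) is exactly the verification being alluded to. Singling out the $S=V$ term to get strict positivity when all $x(v)>0$ is also the right way to finish.
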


\begin{exmp}
For the loop-vertex model on the complete graph with its canonical orientation (see below \eqref{defpartfn}) with vertex-weights $x$ and edge-weights $a$, the signed adjacency matrix is given by
\[
K_n  = \begin{pmatrix}
x & a & a & \cdots & a \\
-a & x & a & \cdots & a \\
\vdots & \ddots & \ddots & \ddots & \vdots \\
-a & \cdots & -a &  x & a \\
-a & \cdots & -a  & -a & x
\end{pmatrix}
\]
One can compute the determinant of $K_n$ by using elementary row and column operations to convert it to a tridiagonal matrix. It is then easy to show that $Z_n$ satisfies the recursion $Z_n = 2x Z_{n-1} -(x^2-a^2)Z_{n-2}$. 
It is immediate from the initial conditions $Z_0=1$ and $Z_1= x$ that 
\be \label{completegraph}
Z_{n} = \sum_{k=0}^{\lfloor n/2 \rfloor} \binom n{2k} \; x^{n-2k} \; a^{2k} 
 = \frac{(x+a)^n + (x-a)^n}{2}.
 \ee 
\end{exmp}

\section{The Monopole-Dimer Model on Planar Graphs} 
\label{sec:mondim}
We now focus on the model of physical interest, namely the monopole-dimer model. 
As we shall see, technical reasons force us to restrict our attention to planar graphs. 
We will first define the model for an arbitrary planar graph and state a theorem about the partition function of the model.

From now on, we will use $G$ to mean both the graph and its planar
embedding. As before, $G = [V,E]$ will be a labelled graph with vertex weights
$x(v)$ for $v \in V$ and edge weights $a(v,v')$ whenever $(v,v') \in E$.
The configurations of the model are exactly the loop-vertex configurations $\mathcal{L}$ of Definition~\ref{def:loopvertconf}. 
From here on, we will use the term monopole-dimer configurations instead of loop-vertex configurations. 

Let $C \in \mathcal{L}$ be a monopole-dimer configuration containing an even loop $\ell$. The weight of the loop $\ell = (v_1,\dots,v_{2 n},v_1)$ is given by
\be \label{defloop-planar}
w(\ell) = (-1)^{\substack{\text{number of vertices in $V$} \\ 
\text{enclosed by $\ell$}}} \prod_{j=1}^{2n} a(v_j,v_{j+1}),
\ee
where, as before, $v_{2n+1} \equiv v_1$. Notice that the planarity of the graph is used crucially in ensuring that $w(\ell)$ is well-defined. In the usual way, we set the weight of vertex $v$ to be $x(v)$, and the
weight $w(C)$ of the entire configuration $C$ as
\be \label{monopoledimerwt}
w(C) = \prod_{\ell \text{ a loop}} w(\ell) \prod_{v \text{ a vertex}} x(v).
\ee
Note that the definition of the monopole-dimer model on planar graphs is independent
of any orientation, unlike the loop-vertex model.

\begin{defn} \label{def:monopoledimer}
The {\bf monopole-dimer model} on $G$ is a model of monopole-dimer configurations $\mathcal{L}$ on $G$ where the weight of each configuration is given by \eqref{monopoledimerwt}.
\end{defn}

As before, we let the {\bf (signed) partition function} of the monopole-dimer model on $G$ be
\[
Z_{G} = \sum_{\substack{C \text{ a monopole-dimer} \\ \text{configuration}}} w(C).
\]

\begin{rem} \label{rem:doubledimer}
Configurations of the model are superpositions of two configurations of the monomer-dimer model with identical locations of monomers and thus generalise the so-called {\em double-dimer model} \cite{kenyon2011a,kenyon2011b}. Since the weight of each double-dimer loop is given a sign which is the parity of the number of monomers enclosed by it, it is reminiscent of the Dirac string representation of the monopole.
Dirac had shown by integrating the flux around a curve enclosing the string that the well-definedness of the vector potential led naturally to the quantization of charge \cite{dirac1978}. 
\end{rem}

We recall the notion of a {\bf Kasteleyn orientation} for a planar graph. We will consider the case of bipartite graphs for simplicity; the general case is similar. In this case, Kasteleyn \cite{kasteleyn1961} showed
that there exists an orientation $\cO$  on $G$ such that every basic
loop enclosing a face has an odd number of clockwise oriented
edges. This is sometimes called the {\bf clockwise-odd} property.
Using this orientation $\cO$, Kasteleyn showed that the dimer 
partition function on $G$ can be written as a Pfaffian of an even antisymmetric
matrix, now called the Kasteleyn matrix.
Note that the signed adjacency matrix $K$ in \eqref{defkast} differs from the Kasteleyn matrix by a diagonal matrix. In what follows, we will refer to the signed adjacency matrix as a {\bf (modified) Kasteleyn matrix}.

\begin{thm} \label{thm:pfmonopoledimer}
Let $\cO$ be a Kasteleyn orientation on the planar graph $G$ and let $K$ be the modified Kasteleyn matrix defined as \eqref{defkast}. Then the partition function of the monopole-dimer model on $G$ can be written as
\[
Z_G = \det K.
\]
Moreover, Corollary~\ref{cor:pf-lvpositive} immediately implies that $Z_G$ is a positive polynomial in the weights.
\end{thm}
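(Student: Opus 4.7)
My plan is to reduce the claim to Theorem~\ref{thm:pfloopvertex}. A Kasteleyn orientation $\cO$ is in particular an orientation of $G$, so Theorem~\ref{thm:pfloopvertex} applied to $(G,\cO)$ yields
\[
\det K \;=\; \sum_{C\in\mathcal L} w_{\cO}(C),
\]
where $w_{\cO}$ is the loop-vertex weight \eqref{defloopvertexwt}. The monopole-dimer partition function is the sum over the same set $\mathcal L$ but with the weight \eqref{monopoledimerwt}. Since both weights agree on isolated vertices and are multiplicative over loops, it suffices to show that for every even oriented loop $\ell=(v_1,\dots,v_{2n},v_1)$ appearing in some $C\in\mathcal L$,
\[
-\prod_{j=1}^{2n}\sgn(v_j,v_{j+1}) \;=\; (-1)^{V_{\mathrm{enc}}(\ell)},
\]
where $V_{\mathrm{enc}}(\ell)$ denotes the number of vertices of $G$ strictly enclosed by $\ell$.

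Because every vertex belongs to at most two edges of $C$, the vertices of $\ell$ are distinct and $\ell$ is a simple closed curve in the planar embedding of $G$. The case $n=1$ (a doubled edge) is trivial: no vertex is enclosed and the sign product is $+1$. For $n\ge 2$, I would fix a clockwise traversal of $\ell$ and let $n_{\mathrm{CW}}$ count those edges of $\ell$ whose $\cO$-orientation matches this traversal. A direct calculation then gives $\prod_j\sgn(v_j,v_{j+1})=(-1)^{2n-n_{\mathrm{CW}}}=(-1)^{n_{\mathrm{CW}}}$, independent of the chosen traversal direction. The problem thereby reduces to the parity identity
\[
n_{\mathrm{CW}} \;\equiv\; V_{\mathrm{enc}}(\ell) + 1 \pmod 2.
\]

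This last identity is the classical Kasteleyn cycle lemma, which I would prove by summing the clockwise-odd condition $c(f)\equiv 1\pmod 2$ over the bounded faces $f$ enclosed by $\ell$: each edge lying strictly interior to the disk bounded by $\ell$ is CW for exactly one of its two adjacent enclosed faces and so contributes $1$ to $\sum_f c(f)\pmod 2$, while each edge on $\ell$ contributes $1$ precisely when it is oriented CW around $\ell$. This yields $n_{\mathrm{CW}}+E_{\mathrm{int}}\equiv F_{\mathrm{enc}}\pmod 2$. Applying Euler's formula to the closed disk bounded by $\ell$ gives $V_{\mathrm{enc}}-E_{\mathrm{int}}+F_{\mathrm{enc}}=1$, and substitution produces the required parity. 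The main obstacle is this Euler-characteristic bookkeeping, for which one must carefully separate boundary, interior, and cycle contributions; the remainder of the argument is a routine sign-match. Once the per-loop weights are shown to agree, we conclude $Z_G=\det K$, and the positivity assertion then follows immediately from Corollary~\ref{cor:pf-lvpositive}.
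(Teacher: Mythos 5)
Your proposal is correct and follows essentially the same route as the paper: reduce to Theorem~\ref{thm:pfloopvertex}, then establish the per-loop sign identity by summing the clockwise-odd condition over the faces enclosed by the loop (each interior edge contributing once) and invoking Euler's formula for the disk, so that the number of clockwise loop edges is odd exactly when the number of enclosed vertices is even. The only differences are cosmetic: you make the sign bookkeeping $-\prod_j\sgn(v_j,v_{j+1})=(-1)^{n_{\mathrm{CW}}+1}$ and the doubled-edge case explicit, which the paper leaves implicit.
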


\begin{proof}
To prove this, we have to show that the weight of a loop in a planar graph with a Kasteleyn orientation defined by \eqref{defloop-planar} is the same as that
defined in \eqref{defloop}. Suppose the loop is of length $2 \ell$ and there are $v$ internal vertices, $e$ internal edges and $f$ faces. 
Suppose the Kasteleyn orientation is such that there are $o_j$ clockwise edges in face $j$, where each $o_j$ is odd. The total number of clockwise edges on the loop is therefore $\sum_{j=1}^f o_j - e$ since each internal edge contributes twice to the count, once clockwise and once counter-clockwise. 
The Euler characteristic $v-e+f$ is 1 on the plane since we exclude the unbounded face. Since the parity of $\sum_{j=1}^f o_j - e$ is the same as that of $f-e$, which equals $v-1$, we have shown that the total number of clockwise edges on the loop is odd if and only if $v$ is even. This shows that the weights in \eqref{defloop-planar} and \eqref{defloop} coincide.
\end{proof}
 
\begin{exmp} 
Consider the cycle graph $C_n$ where the vertices
are labelled in cyclic order with the canonical orientation 
with weights  $a$ to each edge
and $x$ to each vertex. The modified Kasteleyn matrix $K_n$ is then
\[
K_n  = \begin{pmatrix}
x & a & 0 & \cdots & a \\
-a & x & a & \cdots & 0 \\
& \ddots & \ddots & \ddots & \\
0 & \cdots & -a & x & a \\
-a & 0 & \cdots & -a & x
\end{pmatrix}.
\]
One can then show with a little bit of work that the partition function $Z_{n}$
satisfies
\[
Z_{n} = \det K_n =
\begin{cases}
x a^n F_{n} \left( \frac xa \right) + 2 a^{n+1} F_{n-1} \left( \frac xa \right), 
& \text{if $n$ is odd}, \\
x a^{n-1} F_{n} \left( \frac xa \right) + 2 a^{n} F_{n-1} \left( \frac xa \right) + 2a^n,
& \text{if $n$ is even},
\end{cases}
\]
where $F_n(x)$ is the $n$'th {\bf Fibonacci polynomial} defined by the recurrence
$F_n(x) = x F_{n-1}(x) + F_{n-2}(x)$ with initial conditions $F_0(x)=0$ and $F_1(x)=1$. 
Using standard properties of the Fibonacci polynomials, we can rewrite
\[
Z_{n} = 
\begin{cases}
\ds a^n \; L_n \left( \frac xa \right) & \text{if $n$ is odd}, \\
\\
\ds a^n  \; \left( L_{n/2} \left( \frac xa \right) \right)^2 & \text{if $n \equiv 0 \;(\bmod\; 4)$ }, \\
\\
\ds a^{n-2}\; (x^2 + 4a^2)\;  \left( F_{n/2} \left( \frac xa \right) \right)^2 
& \text{if $n \equiv 2 \;(\bmod\; 4)$ }.
\end{cases}
\] 
where the {\bf Lucas polynomials} $L_n(x)$ satisfy the same recurrence as the Fibonacci polynomials but with different initial conditions, $L_0(x)=2$ and $L_1(x)=x$.
\end{exmp}

\begin{rem} 
Note that when $n$ is divisible by 4, $\sqrt{Z_n}$ is a positive polynomial and can be considered as the partition function of a model of monomers and dimers. This phenomenon will recur in Section~\ref{sec:mondimgrid}.
\end{rem}

\begin{cor}[Kasteleyn \cite{kasteleyn1961}]
In the absence of vertex weights, i.e. $x(v)=0 \;\forall\; v \in V$, the monopole-dimer model is exactly the double-dimer model (see Remark~\ref{rem:doubledimer}) and consequently, $Z_G = |\pf K|^2$.
\end{cor}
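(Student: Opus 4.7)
The plan is to reduce the corollary to Theorem~\ref{thm:pfmonopoledimer} plus the classical identity $\det A=(\pf A)^2$ for antisymmetric $A$. First I would observe that when $x(v)=0$ for every $v\in V$, any loop-vertex configuration containing an isolated vertex contributes weight $0$ by \eqref{monopoledimerwt}. Hence the only configurations surviving in $Z_G$ are those in which every vertex has degree exactly $2$: disjoint unions of doubled edges and even oriented loops covering all of $V$. As noted in Remark~\ref{rem:doubledimer}, such configurations are in bijection with ordered pairs of perfect matchings of $G$, which is precisely the definition of the double-dimer model. This identification requires $|V|$ to be even (otherwise no such covering exists and $Z_G=0$, which is consistent with $\pf K$ vanishing for antisymmetric matrices of odd order).

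Next I would analyse the matrix $K$. With all vertex weights zero, the first case in \eqref{defkast} contributes nothing to the diagonal, so $K(v,v)=0$, and for the off-diagonal entries the remaining two cases give $K(v,v')=-K(v',v)$. Thus $K$ is an antisymmetric matrix, in fact the classical Kasteleyn matrix of $G$ associated to the chosen Kasteleyn orientation $\cO$. Then by Theorem~\ref{thm:pfmonopoledimer},
\[
Z_G=\det K=(\pf K)^2=|\pf K|^2,
\]
where the middle equality is the standard polynomial identity relating the determinant and Pfaffian of an antisymmetric matrix of even order.

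The one point that deserves a closing remark — and which I would expect to be the only subtle step — is the compatibility of the combinatorial weights. In the monopole-dimer sum, individual even loops pick up signs $(-1)^{\#\text{enclosed vertices}}$ from \eqref{defloop-planar}, so the surviving double-dimer configurations do \emph{not} all carry positive weight term by term; yet their total sum collapses to the manifestly nonnegative $|\pf K|^2$. This is not an obstacle, however, because Theorem~\ref{thm:pfmonopoledimer} already provides the equality $Z_G=\det K$ directly, without having to match weights configuration by configuration; the sign cancellations are handled internally by the Leibniz expansion used in the proof of Theorem~\ref{thm:pfloopvertex}, via the Kasteleyn clockwise-odd property invoked there. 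Combining this with Kasteleyn's theorem, which identifies $|\pf K|$ with the dimer partition function of $G$, yields the claimed interpretation of $Z_G$ as the double-dimer partition function.
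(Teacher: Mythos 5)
Your overall route is the intended one: with $x\equiv 0$ the diagonal of $K$ vanishes, $K$ is the classical antisymmetric Kasteleyn matrix, and Theorem~\ref{thm:pfmonopoledimer} together with $\det K=(\pf K)^2$ gives $Z_G=|\pf K|^2$. The paper offers no separate argument for this corollary, and your derivation of the partition-function identity is correct.

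However, your closing paragraph contains a genuine error that undercuts the \emph{first} assertion of the corollary, namely that the model ``is exactly the double-dimer model.'' You claim that the surviving fully-packed configurations ``do not all carry positive weight term by term'' and then argue this does not matter because the determinant identity holds anyway. In fact the weights \emph{are} all positive, and this is precisely what one must check to identify the signed monopole-dimer measure with the (nonnegatively weighted) double-dimer measure configuration by configuration; the determinant identity alone only matches the two partition functions, not the two models. The verification is short: in a configuration with no isolated vertices, every vertex lies on some component (a doubled edge or an even loop), and by planarity each such component lies entirely inside or entirely outside any given loop $\ell$ of the configuration. Hence the set of vertices enclosed by $\ell$ is a disjoint union of whole components, each of even cardinality, so the exponent in \eqref{defloop-planar} is even and $w(\ell)=+\prod_j a(v_j,v_{j+1})$ for every loop. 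With that one line added, the identification with the double-dimer model is complete and the rest of your argument goes through as written.
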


We will now explore some consequences of the determinant formula for the monopole-dimer model. Unlike for the usual dimer model, one cannot calculate probabilities of events for the monopole-dimer model since the measure on configurations here is not positive. However, one can consider expectations of observables in this signed measure.

\begin{defn}
The {\em joint correlation} of a subconfiguration of monopoles $v_1,\dots,v_j$ and loops $\ell_1,\dots,\ell_k$ in the graph $G$ is 
\[
\langle v_1,\dots,v_j; \ell_1,\dots,\ell_k \rangle = \prod_{r=1}^j x(v_r) \prod_{s=1}^k w(\ell_s) \; \frac{\widehat Z_{G'}}{Z_G}
\]
where $G'$ is the subgraph of $G$ with the vertices $v_1,\dots,v_j$ and those in loops 
$\ell_1,\dots,\ell_k$ removed; and $\widehat Z_{G'}$ is the partition function of the monopole-dimer model in $G'$ with the caveat that the sign of loops in $G'$ given by 
\eqref{defloop-planar} be taken by considering vertices in all of $G$
\end{defn}

To give a formula for joint correlations, we recall the complementary minor identity of Jacobi. For a $k \times k$ nonsingular matrix $A$, $1 \leq i \leq k$, sequences $[p] = (p_1,\dots,p_i), [q] = (q_1,\dots,q_i)$ where 
$1 \leq p_1 < \cdots < p_i \leq k$ and $1 \leq q_1 < \cdots < q_i \leq k$, let $A^{[p]}_{[q]}$ be the $i \times i$ submatrix of $A$ consisting of rows $p_j$ and columns $q_j$. Also, let $[\bar{p}]$ (resp. $[\bar{q}]$) be the complementary sets $\{1,\dots,k\} \setminus [p]$ (resp. $\{1,\dots,k\} \setminus [q]$). Recall that the determinant of such a submatrix is called a {\em minor}, and when $[p]=[q]$, both the submatrix and its minor are qualified by the adjective {\em principal}.

\begin{thm}[Jacobi, see $\S 14.16$ of \cite{gradry2000}] \label{thm:jacobi}
\[
\det \left( (A^{-1})^{[\bar q]}_{[\bar p]} \right)= \frac{(-1)^{p_1+ q_1 +\cdots+ p_i+q_i}}{\det A}
\det A^{[p]}_{[q]}.
\]
\end{thm}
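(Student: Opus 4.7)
The plan is to prove Jacobi's identity by reducing to the standard Schur complement computation via row and column permutations. First I would permute the rows of $A$ so that $p_1,\dots,p_i$ occupy positions $1,\dots,i$ in increasing order, and similarly permute the columns so that $q_1,\dots,q_i$ occupy positions $1,\dots,i$; call these permutations $\sigma$ and $\tau$, and the rearranged matrix $\tilde A$. A direct inversion count shows that $\sigma$ has $\sum_j p_j - \binom{i+1}{2}$ inversions (each $p_j$ moves past exactly $p_j - j$ elements of $[\bar p]$), with the analogous statement for $\tau$. Consequently
\[
\det \tilde A = \sgn(\sigma)\sgn(\tau)\det A = (-1)^{p_1+q_1+\cdots+p_i+q_i}\det A,
\]
since $2\binom{i+1}{2}$ is even. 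By construction, the top-left $i\times i$ block of $\tilde A$ equals $A^{[p]}_{[q]}$ in the natural order, and since $\tilde A^{-1}$ is obtained by permuting rows of $A^{-1}$ by $\tau$ and columns by $\sigma$, the bottom-right $(k-i)\times(k-i)$ block of $\tilde A^{-1}$ is exactly $(A^{-1})^{[\bar q]}_{[\bar p]}$. Writing $\tilde A$ and $\tilde A^{-1}$ in the resulting $2\times 2$ block form with blocks $A_{11},A_{12},A_{21},A_{22}$ and $B_{11},B_{12},B_{21},B_{22}$, the theorem reduces to the sign-free identity $\det B_{22} = \det A_{11}/\det \tilde A$.

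For this reduced identity I would use the Schur complement. Assuming temporarily that $A_{11}$ is invertible, the standard inverse formula for a block matrix gives $B_{22} = (A_{22}-A_{21}A_{11}^{-1}A_{12})^{-1}$, while the classical block-determinant identity says $\det \tilde A = \det A_{11}\cdot\det(A_{22}-A_{21}A_{11}^{-1}A_{12})$; taking determinants and dividing yields $\det B_{22} = \det A_{11}/\det \tilde A$ at once.

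To remove the temporary assumption that $A_{11}$ is invertible, I would use a density argument. Both sides of Jacobi's identity are rational functions on the open set $\{\det A\neq 0\}$ on which $A^{-1}$ is defined, and they agree on the further dense open subset where also $\det A^{[p]}_{[q]}\neq 0$. Hence they agree on all of $\{\det A\neq 0\}$.

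The main obstacle is simply the careful sign bookkeeping in the first (permutation) step; once that is in place the rest is routine block linear algebra. An alternative route via the adjugate matrix combined with Cauchy--Binet would also work, but this block-matrix argument is considerably cleaner.
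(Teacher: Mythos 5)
Your proof is correct, but there is nothing in the paper to compare it against: the paper states Jacobi's complementary minor identity as a classical result, citing $\S 14.16$ of Gradshteyn--Ryzhik, and gives no proof of its own. Your argument is the standard self-contained one and all the delicate points are handled properly: the inversion count $\sum_j (p_j - j)$ for the row permutation is right (each $p_j$ must pass exactly the $p_j - j$ elements of $[\bar p]$ smaller than it), the two $\binom{i+1}{2}$ corrections cancel modulo $2$ so the total sign is indeed $(-1)^{p_1+q_1+\cdots+p_i+q_i}$, and the identification of the bottom-right block of $\tilde A^{-1}$ with $(A^{-1})^{[\bar q]}_{[\bar p]}$ (note the transposition of the roles of $[p]$ and $[q]$, which you get right) follows from $\tilde A^{-1} = Q^{-1}A^{-1}P^{-1}$. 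The reduction to $\det B_{22} = \det A_{11}/\det\tilde A$ via the Schur complement and the removal of the auxiliary hypothesis $\det A_{11}\neq 0$ by density (or equivalently by treating both sides as rational functions of the entries and clearing denominators to get a polynomial identity) are both standard and sound. Since the paper applies the theorem to a matrix whose entries are indeterminates, the rational-function formulation of your density step is actually the more natural one in context.
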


\begin{rem} \label{rem:nonnegwts}
In the case of an {\em unsigned} combinatorial model on a graph (i.e. with nonnegative weights) whose partition function can be written as a determinant, Theorem~\ref{thm:jacobi} implies that probabilities of local events can be computed in terms of principal minors of the inverse. This fact has been used with great success for the dimer model \cite{kenyon1997}.
\end{rem}

\begin{lem} \label{lem:jointcor}
Let $[\bar p]$ be the set of vertices of monopoles $v_1,\dots,v_j$ and of loops $\ell_1,\dots,\ell_k$. Then the joint correlation is given by
\[
\langle v_1,\dots,v_j; \ell_1,\dots,\ell_k \rangle = \prod_{r=1}^j x(v_r) \prod_{s=1}^k w(\ell_s) \; \det \left((K^{-1})^{[\bar p]}_{[\bar p]} \right).
\]
Furthermore, it is positive if the subconfiguration has positive weight.
\end{lem}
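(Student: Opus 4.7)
The plan is to unwind the definition of the joint correlation into a ratio of determinants, collapse the ratio to a single principal minor of $K$ via Jacobi's complementary-minor identity (Theorem~\ref{thm:jacobi}), and reinterpret that minor combinatorially using Theorem~\ref{thm:pfloopvertex}.

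Concretely, by definition the joint correlation equals
\[
\prod_{r=1}^j x(v_r) \prod_{s=1}^k w(\ell_s)\cdot \frac{\widehat Z_{G'}}{Z_G},
\]
and Theorem~\ref{thm:pfmonopoledimer} gives $Z_G = \det K$. Writing $[p]$ for the complement of $[\bar p]$ (i.e.\ the vertex set of $G'$), I would apply Theorem~\ref{thm:jacobi} with $A=K$ and $[p]=[q]$; the Jacobi sign is $(-1)^{2(p_1+\cdots+p_i)}=+1$, so
\[
\det K \cdot \det\bigl((K^{-1})^{[\bar p]}_{[\bar p]}\bigr) = \det K^{[p]}_{[p]}.
\]
It therefore suffices to identify $\det K^{[p]}_{[p]}$ with $\widehat Z_{G'}$.

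The submatrix $K^{[p]}_{[p]}$ is precisely the modified Kasteleyn matrix \eqref{defkast} of the subgraph $G'$ equipped with the inherited orientation $\cO|_{G'}$. Theorem~\ref{thm:pfloopvertex} then expresses its determinant as the signed loop-vertex partition function on $(G', \cO|_{G'})$, with each loop carrying the orientation-based sign of \eqref{defloop}. Any loop $\ell$ appearing in such a configuration uses only edges of $G' \subseteq G$ with their original $\cO$-orientations, so it is equally a loop in the Kasteleyn-oriented $G$. The Euler-characteristic argument in the proof of Theorem~\ref{thm:pfmonopoledimer}, carried out over the faces of $G$ that $\ell$ encloses, therefore applies verbatim and shows the \eqref{defloop}-sign of $\ell$ to equal $(-1)^v$, where $v$ is the number of vertices of $G$ (\emph{not} merely of $G'$) enclosed by $\ell$ in the planar embedding. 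This is exactly the sign prescribed by \eqref{defloop-planar} in the definition of $\widehat Z_{G'}$, whence $\det K^{[p]}_{[p]} = \widehat Z_{G'}$ and the claimed formula follows.

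For positivity, $K^{[p]}_{[p]}$ is itself a diagonal plus antisymmetric matrix, so Corollary~\ref{cor:pf-lvpositive} applied to $(G', \cO|_{G'})$ yields $\widehat Z_{G'} > 0$ whenever the remaining weights are positive; combined with $Z_G > 0$, this makes the correlation a positive multiple of the sub-configuration's weight. The main obstacle I expect is the bookkeeping in the identification step: one must verify that the sign produced by Theorem~\ref{thm:pfloopvertex} on the smaller graph $G'$ still counts enclosed vertices of the \emph{original} $G$. This is ultimately what forces the definition of $\widehat Z_{G'}$ to use the $G$-convention for enclosed vertices — the removed vertices have vanished from the graph but not from the plane, and they must still be counted by the face-based sign calculation inherited from $\cO$ on $G$.
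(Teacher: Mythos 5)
Your proposal is correct and follows essentially the same route as the paper: identify the principal minor $\det K^{[p]}_{[p]}$ with $\widehat Z_{G'}$, convert it to the complementary minor of $K^{-1}$ via Theorem~\ref{thm:jacobi}, and deduce positivity from the diagonal-plus-antisymmetric structure via Corollary~\ref{cor:pf-lvpositive}. Your careful justification of the identification step --- that the Euler-characteristic argument still counts enclosed vertices of $G$ rather than of $G'$, which is precisely why $\widehat Z_{G'}$ is defined with that caveat --- is a detail the paper's proof leaves implicit, but it does not change the approach.
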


\begin{proof}
The sum over all configurations with these prescribed monopoles and loops is given by the appropriate principal minor of the modified Kasteleyn matrix, $K$. By Theorem~\ref{thm:jacobi}, this is exactly the complementary minor of the inverse, which exists because of Corollary~\ref{cor:pf-lvpositive}. The minor is the determinant of  a matrix which is the sum of an antisymmetric matrix and a diagonal matrix and  is positive, again using Corollary~\ref{cor:pf-lvpositive}. Thus, the only way for the joint correlation to be negative is if the subconfiguration itself has negative weight.
\end{proof}

We will use Lemma~\ref{lem:jointcor} in the following to compute monopole correlations.
The joint correlation of the monopole-loop configuration on the right side of Figure~\ref{figure.squares3} in a large $m \times n$ square is the simplest example of one which is negative. 

\section{The Monopole-Dimer model on the Rectangular Grid} 
\label{sec:mondimgrid}
We will now calculate the partition function for the monopole-dimer model in Section~\ref{sec:mondim} on the 
rectangular grid graph, thereby
generalising the famous product formula of Temperley-Fisher \cite{fisher1961} and
Kasteleyn \cite{kasteleyn1961}. 

Consider the $m \times n$ grid $Q_{m,n} = \{(i,j) \;|\; 1 \leq i \leq m, 1 \leq j \leq n\}$ with  
horizontal edge-weights $a$, vertical edge-weights $b$ and vertex-weights $z$.
For the sake of completeness, we recall the Kasteleyn orientation $\cO$ prescribed independently by Fisher \cite{fisher1961} and Kasteleyn \cite{kasteleyn1961}.  
The arrow always points in the direction $(i,j) \to (i,j+1)$, i.e., towards the positive $x$-axis. In the $y$-direction, the arrow points from $(i,j) \to (i,j+1)$ (i.e. towards the positive $y$-axis) whenever $i$ is odd and in the reverse direction when $i$ is even. This orientation can be easily seen to be induced by a ``snake-like'' labelling, as seen in Figure~\ref{figure.squares3}.

Define the function
\[
Y_m(b;z) = \prod_{j=1}^{\lfloor m/2 \rfloor} \left( z^2 + 4 b^2 \cos^2
 \frac{j \pi}{m+1}  \right).
\]

\begin{thm} \label{thm:partfngrid}
The partition function of the monopole-dimer model on $Q_{m,n}$ is given by
\begin{equation}	
\begin{split} \label{partfn-grid}
Z_{{m,n}}=& \prod_{j=1}^{\lfloor m/2 \rfloor}  \prod_{k=1}^{\lfloor n/2 \rfloor}  
\left( z^2 + 4 b^2 \cos^2  \frac{j \pi}{m+1}  + 4 a^2 \cos^2  \frac{k \pi}{n+1} \right)^2 \\
&\times \begin{cases}
1 & \text{if $m$ and $n$ are even}, \\
Y_m(b;z) & \text{if $m$ is even and $n$ is odd}, \\
Y_n(a;z) & \text{if $m$ is odd and $n$ is even}, \\
z Y_m(b;z) Y_n(a;z) & \text{if $m$ and $n$ are odd}.
\end{cases}
\end{split}
\end{equation} 
\end{thm}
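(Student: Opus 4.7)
My plan is to reduce to $\det K$ via Theorem~\ref{thm:pfmonopoledimer}, where $K = zI + A$ and $A$ is the real antisymmetric part of the modified Kasteleyn matrix. Since $K^T = zI - A$ has the same determinant, I would first pass to the square,
\[
(\det K)^2 \;=\; \det(zI+A)\det(zI-A) \;=\; \det(z^2 I + A^T A),
\]
converting the problem into an eigenvalue computation for the positive semidefinite operator $-A^2 = A^T A$.

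The heart of the argument is a Kronecker-product decomposition of $A$. Ordering the vertices row by row, I would write $A = a\,(I_m \otimes H_n) + b\,(H_m \otimes D_n)$, where $H_k$ is the $k \times k$ antisymmetric tridiagonal matrix with $+1$ on the superdiagonal, and $D_n = \mathrm{diag}(1,-1,1,\dots)$ records the column-parity sign prescribed by the given Kasteleyn orientation. The one non-obvious calculation is the anticommutation
\[
(I_m \otimes H_n)(H_m \otimes D_n) + (H_m \otimes D_n)(I_m \otimes H_n) \;=\; H_m \otimes (H_n D_n + D_n H_n) \;=\; 0,
\]
which holds because $(H_n D_n + D_n H_n)_{jk} = (H_n)_{jk}\bigl((D_n)_{jj} + (D_n)_{kk}\bigr)$ and the tridiagonal support of $H_n$ forces $j,k$ to have opposite parity whenever this entry is nonzero. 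This identity is a compact algebraic restatement of the clockwise-odd property on each unit face. Together with $D_n^2 = I_n$, it yields the Kronecker sum
\[
-A^2 \;=\; a^2\bigl(I_m \otimes (-H_n^2)\bigr) \;+\; b^2\bigl((-H_m^2) \otimes I_n\bigr).
\]

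The rest is spectral. The antisymmetric tridiagonal $H_k$ has eigenvalues $\pm 2i\cos(j\pi/(k+1))$ for $j = 1, \dots, \lfloor k/2 \rfloor$ (computed via $\sin$-type eigenvectors), together with one simple zero eigenvalue when $k$ is odd. Hence $-H_k^2$ has eigenvalue $4\cos^2(j\pi/(k+1))$ with multiplicity two for each such $j$, plus a zero when $k$ is odd. Since eigenvalues of a Kronecker sum of commuting operators add, I would read off
\[
(\det K)^2 \;=\; \prod_{j=1}^{m}\prod_{k=1}^{n}\Bigl(z^2 + 4b^2\cos^2\tfrac{j\pi}{m+1} + 4a^2\cos^2\tfrac{k\pi}{n+1}\Bigr).
\]

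The final step is parity bookkeeping. The symmetry $\cos^2(j\pi/(m+1)) = \cos^2((m+1-j)\pi/(m+1))$ folds the $j$-product into a square over $j = 1, \dots, \lfloor m/2 \rfloor$, except for a residual factor at $j = (m+1)/2$ (where the cosine vanishes) when $m$ is odd; similarly for $k$. Taking the positive square root, justified by Corollary~\ref{cor:pf-lvpositive}, and collating the four parity cases reproduces \eqref{partfn-grid}: the residuals from each odd dimension assemble into $Y_m(b;z)$ and $Y_n(a;z)$, and a solitary $z$ survives in the doubly-odd case. I expect the anticommutation identity to be the conceptual crux of the whole argument (and the true reason the grid is exactly solvable in this model), with the four-case parity reconciliation at the end being the most error-prone but entirely mechanical piece of the calculation.
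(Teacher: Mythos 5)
Your argument is correct, but it is not the paper's route. The paper works with $\det K$ directly: it uses Fisher's snake labelling to put $K$ in block-tridiagonal form with tridiagonal blocks $X$ and antidiagonal blocks $Y$, conjugates by the explicit unitary $U = u_m \otimes u_n$ of \eqref{defu} to block-diagonalise $K$ into the matrices $\chi_s$ of \eqref{defchi}, and then evaluates each $\det\chi_s$ by pairing entries into $2\times 2$ determinants, with the odd cases handled through the central row of $\chi_s$ and the central block $\chi_{(m+1)/2}$. You instead pass to $(\det K)^2 = \det(z^2 I - A^2)$ and diagonalise $-A^2$ via the anticommutation of $a\,(I_m\otimes H_n)$ and $b\,(H_m\otimes D_n)$, which is a clean Clifford-algebra-style encapsulation of the clockwise-odd property and avoids ever writing down the diagonalising unitary; the cost is that you only recover $\det K$ up to sign and must invoke Corollary~\ref{cor:pf-lvpositive} (as you do) to take the positive square root, and you must note that your row-by-row orientation is itself Kasteleyn so that Theorem~\ref{thm:pfmonopoledimer} applies --- which it is, since the two families of Kronecker factors encode uniform horizontal arrows and column-alternating vertical arrows. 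A trade-off worth noting: the paper's heavier machinery is not wasted, since the explicit $U$ and the blocks $\chi_s^{-1}$ are reused in Lemma~\ref{lem:Kinv} to compute the entries of $K^{-1}$ for the correlation formulas, whereas your spectral argument yields the characteristic polynomial of $A$ but not the inverse in usable coordinate form. Your parity bookkeeping at the end (folding $j \mapsto m+1-j$, $k \mapsto n+1-k$, with residual factors $Y_m(b;z)$, $Y_n(a;z)$ and the lone $z$ from $j=(m+1)/2$, $k=(n+1)/2$) is consistent with \eqref{partfn-grid}.
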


\begin{proof}
The matrix $K_{m,n}$ is exactly the regular Kasteleyn matrix added to $z$ times the identity matrix of size $mn$. Therefore, the inversion technique described in either of these papers works identically when $m$ or $n$ are even. 
The case when both $m$ and $n$ are odd is a special case, which has to be worked out separately. Both cases can be analysed simultaneously.

We use Fisher's labelling \cite{fisher1961}. The modified Kasteleyn matrix can be written in $m \times m$ tridiagonal block form 
\[
K = \begin{pmatrix}
X & Y & 0 & 0 &\cdots & 0 \\
-Y & X & Y & 0 &\cdots & 0 \\
0 & -Y & X & Y & \cdots & 0 \\
\vdots & \ddots & \ddots & \ddots & \ddots & \vdots \\
0 & \cdots & 0 & -Y & X & Y \\
0 & 0 & \cdots & 0 & -Y & X
\end{pmatrix},
\]
where each of the blocks is an $n \times n$ matrix with
\[
X = \begin{pmatrix}
z & a & 0 & 0 & \cdots & 0 \\
-a & z & a & 0 & \cdots & 0 \\
0 & -a & z & a  & \cdots & 0 \\
\vdots & \ddots & \ddots & \ddots & \ddots & \vdots \\
0 & \cdots & 0 & -a & z & a \\
0 & 0 & \cdots & 0 & -a & z
\end{pmatrix}, \quad
\text{ and }
Y = \begin{pmatrix}
0 & \cdots & 0 & 0 & b \\
0 & \cdots & 0 & b & 0 \\
\vdots & \Ddots & \Ddots & \Ddots & \vdots \\
0 & b & 0 & \cdots & 0 \\
b & 0 & 0 & \cdots & 0
\end{pmatrix}.
\]
Fisher showed \cite{fisher1961} that the matrix $K$ can be simplified considerably by the unitary transformation $U = u_m \otimes u_n$ where $u_s$ is an $s \times s$ matrix with entries
\be \label{defu}
(u_s)_{p,q} = \sqrt{\frac{2}{s+1}} i^p \sin \frac{\pi p q }{s+1},
\ee
where $i = \sqrt{-1}$. $u_s^{-1}$ also has an equally simple formula
\be \label{defuinv}
(u_s)^{-1}_{p,q} = \sqrt{\frac{2}{s+1}} (-i)^q \sin \frac{\pi p q }{s+1}.
\ee
One can show that $U^{-1}\, K\, U$ can be written as a block diagonal matrix of
$\chi_s$'s for $s=1,\dots,m$, where
\be \label{defchi}
\chi_s = \begin{pmatrix}
\ss z + 2ia \cos \frac{\pi}{n+1} & \ss 0 & \dots & \ss 0 & \ss (-1)^{n-1} 2 i^n b \cos \frac{\pi s}{m+1} \\
\ss 0 & \ss z + 2ia \cos \frac{2\pi}{n+1} & & \ss (-1)^{n-2} 2 i^n b \cos \frac{\pi s}{m+1} & \ss 0 \\
\vspace{-0.2cm}\vdots & &  \ddots \Ddots & & \vdots \\
\vdots & &  \Ddots \ddots & & \vdots \\
\ss 0 & \ss (-1)^{1} 2 i^n b \cos \frac{\pi s}{m+1} & & \ss z + 2ia \cos \frac{(n-1)\pi}{n+1} & \ss 0 \\
\ss (-1)^{0} 2 i^n b \cos \frac{\pi s}{m+1} & \ss 0 & \dots & \ss 0 & \ss z + 2ia \cos \frac{n \pi}{n+1}
\end{pmatrix}
\ee
Let us now look at each of the cases. When $n$ is even, the determinant of $\chi_s$ is easily expressed as a product of $2 \times 2$ determinants,
\begin{align*}
\det\chi_s &= \prod_{p=1}^{\frac{n}{2}} 
\begin{vmatrix}
\ds z + 2ia \cos  \frac{p \pi}{n+1} &  (-1)^{n-p} 2 i^n b \cos  \frac{s \pi}{m+1} \\
\\
\ds (-1)^{p-1} 2 i^n b \cos  \frac{s \pi}{m+1}  & z + 2ia \cos \frac{(n+1-p) \pi}{n+1} 
\end{vmatrix} \\
&= \prod_{p=1}^{\frac{n}{2} } 
\left( z^2 + 4 b^2 \cos^2  \frac{s \pi}{m+1} + 4 a^2 \cos^2  \frac{p \pi}{n+1}  \right)
= \det\chi_{m+1-s}.
\end{align*}
When $m$ is also even, we get after multiplying over all $s$, precisely the formula in the first case \eqref{partfn-grid}.
When $m$ is odd,  we get an additional factor $|\chi_{(m+1)/2}|$, which is easy to compute because it is a diagonal matrix. The factor we get is
\begin{align*}
\det\chi_{(m+1)/2} &= \prod_{p=1}^{n } \left( z + 2 i a  \cos  \frac{p \pi}{n+1} \right) \\
&= \prod_{p=1}^{\frac{n}{2} } \left( z^2 + 4 a^2  \cos^2  \frac{p \pi}{n+1} \right) = Y_n(a;z).
\end{align*}
This also matches with \eqref{partfn-grid}.
When $n$ is odd and $m$ is even, we have the additional factor contributing to each $|\chi_s|$ from the central term,
\[
z+ 2ib \cos \frac{\pi s}{m+1} .
\]
Multiplying this factor for all $s$ gives us $Y_m(b;z)$ as needed. The last case when both $m$ and $n$ are odd gives us both the factors above and an additional term corresponding to the central entry of the central block matrix, which can be seen to be $z$. 
\end{proof}

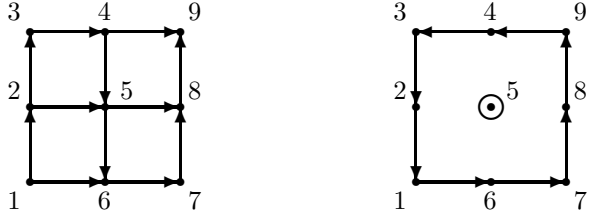
\begin{figure}[ht]
\setlength{\unitlength}{1mm}
\begin{center}
\begin{picture}(25,25)
\put(0,0){\circle*{1}}
\put(0,10){\circle*{1}}
\put(10,0){\circle*{1}}
\put(10,10){\circle*{1}}
\put(0,20){\circle*{1}}
\put(10,20){\circle*{1}}
\put(20,0){\circle*{1}}
\put(20,20){\circle*{1}}
\put(20,10){\circle*{1}}
\put(-3,-3.5){1}
\put(-3,11){2}
\put(-3,21.5){3}
\put(9,-3.5){6}
\put(12,11){5}
\put(9,21.5){4}
\put(21,-3.5){7}
\put(21,11){8}
\put(21,21.5){9}
\thicklines
\put(0,0){\vector(1,0){10}}
\put(0,0){\vector(0,1){10}}
\put(10,10){\vector(0,-1){10}}
\put(10,0){\vector(1,0){10}}
\put(20,0){\vector(0,1){10}}
\put(10,10){\vector(1,0){10}}
\put(0,10){\vector(0,1){10}}
\put(0,10){\vector(1,0){10}}
\put(10,20){\vector(0,-1){10}}
\put(20,10){\vector(0,1){10}}
\put(0,20){\vector(1,0){10}}
\put(10,20){\vector(1,0){10}}
\end{picture}
\hfil
\begin{picture}(25,25)
\put(0,0){\circle*{1}}
\put(0,10){\circle*{1}}
\put(10,0){\circle*{1}}
\put(10,10){\circle*{1}}
\put(0,20){\circle*{1}}
\put(10,20){\circle*{1}}
\put(20,0){\circle*{1}}
\put(20,20){\circle*{1}}
\put(20,10){\circle*{1}}
\put(-3,-3.5){1}
\put(-3,11){2}
\put(-3,21.5){3}
\put(9,-3.5){6}
\put(12,11){5}
\put(9,21.5){4}
\put(21,-3.5){7}
\put(21,11){8}
\put(21,21.5){9}
\thicklines
\put(0,0){\vector(1,0){10}}
\put(10,0){\vector(1,0){10}}
\put(20,0){\vector(0,1){10}}
\put(20,10){\vector(0,1){10}}
\put(20,20){\vector(-1,0){10}}
\put(10,20){\vector(-1,0){10}}
\put(0,20){\vector(0,-1){10}}
\put(0,10){\vector(0,-1){10}}
\put(10,10){\circle{3}}
\end{picture}
\end{center}
\caption{The graph $Q_{3,3}$ of Example~\ref{ex:q33} with its natural orientation on the left and a particular monopole-dimer configuration on the right.
\label{figure.squares3}}
\end{figure}

\begin{exmp} \label{ex:q33}
Consider the first nontrivial case: $m=n=3$. 
Figure~\ref{figure.squares3} shows  $Q_{3,3}$ with a Kasteleyn orientation and one of two configurations which contribute with a weight $-z a^4 b^4$.
The modified Kasteleyn matrix is given by
\[
K_{{3,3}} = \left(
\begin{array}{ccccccccc}
 z & a & 0 & 0 & 0 & b & 0 & 0 & 0 \\
 -a & z & a & 0 & b & 0 & 0 & 0 & 0 \\
 0 & -a & z & b & 0 & 0 & 0 & 0 & 0 \\
 0 & 0 & -b & z & a & 0 & 0 & 0 & b \\
 0 & -b & 0 & -a & z & a & 0 & b & 0 \\
 -b & 0 & 0 & 0 & -a & z & b & 0 & 0 \\
 0 & 0 & 0 & 0 & 0 & -b & z & a & 0 \\
 0 & 0 & 0 & 0 & -b & 0 & -a & z & a \\
 0 & 0 & 0 & -b & 0 & 0 & 0 & -a & z \\
\end{array}
\right)
\]
\vspace{0.3cm}
The partition function is given by
\[
Z_{{3,3}} =z \left(2 a^2+z^2\right) \left(2 b^2+z^2\right) \left(2 a^2+2 b^2+z^2\right)^2,
\]
in agreement with \eqref{partfn-grid}.
\end{exmp}

\begin{rem}
The fact that the partition function $Z_{m,n}$ is an exact square of a positive
polynomial when $m$ and $n$ are even is nontrivial since $K_{m,n}$ is not antisymmetric. 
\end{rem}

We will now use Theorem~\ref{thm:jacobi} to calculate joint correlations in the monopole-dimer model on the grid graph. We focus on the case when $m,n$ are even for simplicity. To do so, we will first need to calculate the matrix entries for the inverse of the modified Kasteleyn matrix. We will now calculate this in full generality.
Since we have used the snake-like Kasteleyn orientation explained at the beginning of this Section, the relationship between the entries of the matrix and coordinates on the grid depends on the parity of the abscissa. To simplify notation, we define the functions
\begin{align*}
\phi_{g,h}(c,d;e,f) = \frac{4 i^{c+d} (-i)^{e+f}}{(m+1)(n+1)} 
&\sin \frac{\pi c g}{m+1} \sin \frac{\pi e g}{m+1} 
\sin \frac{\pi d h}{n+1} \sin \frac{\pi f h}{n+1}
\\
& \times\left( \frac{z - 2 i a \cos \frac{\pi h}{n+1} +(-1)^{f+h} 2i^n b \cos \frac{\pi g}{m+1}}
{z^2  + 4 a^2 \cos^2 \frac{\pi h}{n+1} + 4 b^2 \cos^2 \frac{\pi g}{m+1}}
\right), \\
\psi_{g,h}(c,d;e,f) = \frac{4 i^{c+d} (-i)^{e+f}}{(m+1)(n+1)} 
&\sin \frac{\pi c g}{m+1} \sin \frac{\pi e g}{m+1} 
\sin \frac{\pi d h}{n+1} \sin \frac{\pi f h}{n+1}
\\
& \times\left( \frac{z - 2 i a \cos \frac{\pi h}{n+1} + (-1)^{f+h-1} 2i^n b \cos \frac{\pi g}{m+1}}
{z^2  + 4 a^2 \cos^2 \frac{\pi h}{n+1} + 4 b^2 \cos^2 \frac{\pi g}{m+1}}
\right),
\end{align*}
for fixed $m,n$ and weights $a,b,z$. The only difference between the two functions is the power of $-1$ in the last term of the numerator inside the parenthesis.

\begin{lem} \label{lem:Kinv}
If $n$ is even, the entries of the inverse matrix are given by
\[
\left(K_{m,n}^{-1} \right)_{(c,d),(e,f)} = \sum_{(g,h) \in Q_{m,n}}
\begin{cases}
\phi_{g,h}(c,d;e,f) & \text{if both $c$ and $e$ are odd,} \\
\phi_{g,h}(c,n+1-d;e,f) & \text{if $c$ is even and $e$ is odd,} \\
\phi_{g,h}(c,d;e,n+1-f) & \text{if $c$ is odd and $e$ is even,} \\
\phi_{g,h}(c,n+1-d;e,n+1-f) & \text{if both $c$ and $e$ are even,} \\
\end{cases}
\]
and if $n$ is odd, the entries are given by
\[
\left(K_{m,n}^{-1} \right)_{(c,d),(e,f)} = \sum_{(g,h) \in Q_{m,n}}
\begin{cases}
\psi_{g,h}(c,d;e,f) & \text{if both $c$ and $e$ are odd,} \\
\psi_{g,h}(c,n+1-d;e,f) & \text{if $c$ is even and $e$ is odd,} \\
\psi_{g,h}(c,d;e,n+1-f) & \text{if $c$ is odd and $e$ is even,} \\
\psi_{g,h}(c,n+1-d;e,n+1-f) & \text{if both $c$ and $e$ are even.} \\
\end{cases}
\]
\end{lem}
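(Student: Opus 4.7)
The plan is to invert the block-diagonalization $K = U D U^{-1}$ established in the proof of Theorem~\ref{thm:partfngrid}, where $U = u_m \otimes u_n$ and $D$ is the direct sum of $\chi_1,\dots,\chi_m$. Then $K^{-1} = U D^{-1} U^{-1}$, and in block coordinates $((c,p),(e,q))$ (with block index $c$ or $e$ paired with within-block index $p$ or $q$) the entries factor as
\[
K^{-1}_{((c,p),(e,q))} = \sum_{s=1}^{m}(u_m)_{c,s}(u_m^{-1})_{s,e}\sum_{p',q'=1}^{n} (u_n)_{p,p'}(\chi_s^{-1})_{p',q'}(u_n^{-1})_{q',q}.
\]
The task then is to obtain $\chi_s^{-1}$ in closed form and carry out the inner sum explicitly.

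The matrix $\chi_s$ is supported on the diagonal ($p'=q'$) and the anti-diagonal ($p'+q'=n+1$), so for $n$ even it decouples into $n/2$ independent $2\times 2$ blocks indexed by the pairs $\{p',n+1-p'\}$. When $n$ is odd there are $(n-1)/2$ such $2\times 2$ blocks together with a central $1\times 1$ block at $p'=(n+1)/2$, whose value is $z+2ib\cos\frac{\pi s}{m+1}$, the sum of the coincident ``diagonal'' and ``anti-diagonal'' contributions at that position. A direct $2\times 2$ inversion in each block yields, with $\Delta_{p',s}=z^2+4a^2\cos^2\frac{\pi p'}{n+1}+4b^2\cos^2\frac{\pi s}{m+1}$,
\begin{align*}
(\chi_s^{-1})_{p',p'} &= \frac{z - 2ia\cos\frac{\pi p'}{n+1}}{\Delta_{p',s}}, \\
(\chi_s^{-1})_{p',n+1-p'} &= \frac{(-1)^{n-p'+1}\,2i^n b\cos\frac{\pi s}{m+1}}{\Delta_{p',s}},
\end{align*}
where for $n$ odd at $p'=(n+1)/2$ both entries occupy the same matrix position and should be added, a direct check showing that this reproduces the correct value $1/(z+2ib\cos\frac{\pi s}{m+1})$.

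Substituting into the inner sum, splitting into the diagonal ($p'=q'$) and anti-diagonal ($q'=n+1-p'$) pieces, and using the identity $\sin\frac{\pi(n+1-p')q}{n+1}=(-1)^{q+1}\sin\frac{\pi p' q}{n+1}$ to re-index the anti-diagonal piece, collapses the two contributions into a single $p'$-sum whose numerator factor is
\[
z - 2ia\cos\tfrac{\pi p'}{n+1} + (-1)^{q+n+p'}\,2i^n b\cos\tfrac{\pi s}{m+1}.
\]
The parity of $n$ reduces $(-1)^{q+n+p'}$ to $(-1)^{q+p'}$ when $n$ is even (matching the numerator in $\phi$) and to $(-1)^{q+p'-1}$ when $n$ is odd (matching the numerator in $\psi$). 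Feeding in \eqref{defu} and \eqref{defuinv} throughout and relabeling $(s,p')\to(g,h)$ produces exactly $\sum_{(g,h)\in Q_{m,n}}\phi_{g,h}(c,p;e,q)$ for $n$ even and the analogous $\psi$-expression for $n$ odd.

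Finally I would translate block coordinates back to grid coordinates using the snake-like Fisher labeling: the vertex $(c,d)$ of $Q_{m,n}$ sits in block $c$ with within-block index $p=d$ if $c$ is odd and $p=n+1-d$ if $c$ is even, and analogously for $(e,f)\leftrightarrow(e,q)$. The four cases listed in the lemma correspond exactly to the four parity combinations of $c$ and $e$. The main technical subtlety will be verifying that the central $n$-odd entry fits into the uniform formulas for $\chi_s^{-1}$; once this is checked, the remainder of the argument is careful sign bookkeeping through the Fourier-like inversion.
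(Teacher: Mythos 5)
Your proposal is correct and follows essentially the same route as the paper: block-diagonalize $K_{m,n}$ via $U = u_m\otimes u_n$, invert each $\chi_s$ by exploiting its diagonal/anti-diagonal $2\times 2$ block structure (with the central $1\times 1$ entry when $n$ is odd), conjugate back, and translate block indices to grid coordinates via the snake labeling, which is exactly where the four parity cases in $c$ and $e$ arise. The paper compresses all of this into "a somewhat lengthy but straightforward calculation," so your write-up simply supplies the details it omits; your sign bookkeeping (the $2\times2$ inverses, the $\sin\frac{\pi(n+1-p')q}{n+1}=(-1)^{q+1}\sin\frac{\pi p'q}{n+1}$ re-indexing, and the consistency of the central entry with the uniform formulas) all checks out.
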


\begin{proof}
Since $U^{-1}\, K_{m,n} \, U = \text{Diag}(\chi_1,\dots,\chi_m)$, where $\chi_s$ is given in \eqref{defchi} and $U = u_m \otimes u_n$, $U^{-1}$ are given explicitly in \eqref{defu},\eqref{defuinv},
one starts by inverting $\chi_s$ and obtains $K_{m,n}^{-1}$ as $U^{-1} \, \text{Diag}(\chi_1^{-1},\dots,\chi_m^{-1}) \, U$ by a somewhat lengthy but straightforward calculation. 

The parities of $c$ and $e$ enter in the calculation simply 
because in the Kasteleyn orientation, the coordinates $d,f$ increase from left to right when $c,e$ are odd and from right to left, when $c,e$ are even; see Figure~\ref{figure.squares3} for example.
\end{proof}

\begin{cor} \label{cor:1pt}
The one-point monopole correlation (informally the density) at $(c,d)$ in the $m \times n$ grid is given by
\[
\frac{4 z^2}{(m+1)(n+1)} \sum_{(g,h) \in Q_{m,n}}
\!\! \frac{\sin^2 \frac{\pi c g}{m+1} \sin^2 \frac{\pi d h}{n+1}}
{z^2  + 4 a^2 \cos^2 \frac{\pi h}{n+1} + 4 b^2 \cos^2 \frac{\pi g}{m+1}}.
\]
\end{cor}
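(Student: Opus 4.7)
The plan is to combine Lemma~\ref{lem:jointcor} with the diagonal entry formula of Lemma~\ref{lem:Kinv} and observe that, for $m$ and $n$ even, most of the terms in the resulting sum cancel by symmetry.

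First I would specialize Lemma~\ref{lem:jointcor} to a single monopole $(v_1,\ldots,v_j)=(v)$ with no loops. Since the vertex weight at every lattice site is $z$, this gives
\[
\langle v \rangle \;=\; z \cdot \bigl(K_{m,n}^{-1}\bigr)_{v,v}.
\]
Setting $v=(c,d)$, the one-point correlation equals $z$ times a diagonal entry of $K_{m,n}^{-1}$, which Lemma~\ref{lem:Kinv} expands explicitly (using $\phi$ since $n$ is even). Writing out $\phi_{g,h}(c,d;c,d)$, the prefactor $i^{c+d}(-i)^{c+d}$ collapses to $1$, and since $\sin^2\frac{\pi(n+1-d)h}{n+1}=\sin^2\frac{\pi dh}{n+1}$ (and likewise for the $m$-side), the sum is
\[
\bigl(K_{m,n}^{-1}\bigr)_{(c,d),(c,d)}
=\frac{4}{(m+1)(n+1)}\sum_{(g,h)\in Q_{m,n}}
\sin^2\!\tfrac{\pi c g}{m+1}\sin^2\!\tfrac{\pi d h}{n+1}\cdot N_{g,h},
\]
where
\[
N_{g,h}=\frac{z-2ia\cos\frac{\pi h}{n+1}+(-1)^{d+h}\,2i^{n}b\cos\frac{\pi g}{m+1}}{z^{2}+4a^{2}\cos^{2}\frac{\pi h}{n+1}+4b^{2}\cos^{2}\frac{\pi g}{m+1}}.
\]
This holds irrespective of the parity of $c$, since the ``even $c$'' branch of Lemma~\ref{lem:Kinv} only replaces $d$ by $n+1-d$ in arguments that are already squared or shifted by a full period.

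Next I would exploit two involutions on the summation set $Q_{m,n}=\{1,\ldots,m\}\times\{1,\ldots,n\}$. Because $n$ is even, the map $h\mapsto n+1-h$ is a fixed-point-free involution; it leaves $\sin^2\frac{\pi dh}{n+1}$ and the denominator of $N_{g,h}$ invariant, but flips the sign of $\cos\frac{\pi h}{n+1}$. Hence the $-2ia\cos\frac{\pi h}{n+1}$ contribution cancels in pairs. Because $m$ is even, the map $g\mapsto m+1-g$ is similarly fixed-point-free, leaves $\sin^2\frac{\pi c g}{m+1}$ and the denominator invariant, and flips the sign of $\cos\frac{\pi g}{m+1}$, killing the $(-1)^{d+h}2i^{n}b\cos\frac{\pi g}{m+1}$ contribution. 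Only the real, constant $z$ in the numerator survives.

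Combining these cancellations with the factor of $z$ from Lemma~\ref{lem:jointcor} yields exactly
\[
\frac{4z^{2}}{(m+1)(n+1)}\sum_{(g,h)\in Q_{m,n}}\!\!
\frac{\sin^{2}\frac{\pi c g}{m+1}\sin^{2}\frac{\pi d h}{n+1}}
{z^{2}+4a^{2}\cos^{2}\frac{\pi h}{n+1}+4b^{2}\cos^{2}\frac{\pi g}{m+1}},
\]
as claimed. The routine cosine-sign symmetries are the only real bookkeeping step; there is no genuine obstacle beyond verifying that the parity branches of Lemma~\ref{lem:Kinv} collapse to the same expression on the diagonal, which is straightforward once the squared sines absorb the $d\mapsto n+1-d$ substitution.
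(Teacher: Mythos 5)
Your proposal is correct and follows essentially the same route as the paper: identify the one-point correlation with $z\,(K_{m,n}^{-1})_{(c,d),(c,d)}$ via Lemma~\ref{lem:jointcor}, expand with Lemma~\ref{lem:Kinv}, and cancel the imaginary terms using the symmetries $g \mapsto m+1-g$ and $h \mapsto n+1-h$. The only cosmetic difference is that you restrict to $m,n$ even (so the involutions are fixed-point-free), whereas the paper's argument covers all parities because any fixed point has vanishing cosine and contributes nothing anyway.
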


\begin{proof}
As per the definition of the correlation and Theorem~\ref{thm:jacobi}, the one-point monopole correlation at $(c,d)$ is given by $z\; K^{-1}_{(c,d),(c,d)}$. We use Lemma~\ref{lem:Kinv} and use the symmetry of $\phi_{g,h}(c,d;c,d)$ and $\psi_{g,h}(c,d;c,d)$ under the transformations $g \mapsto m+1-g$ and $h \mapsto n+1-h$ to obtain the result. 
\end{proof}

\begin{figure}[htp!]
\begin{center}
\includegraphics[height=6cm]{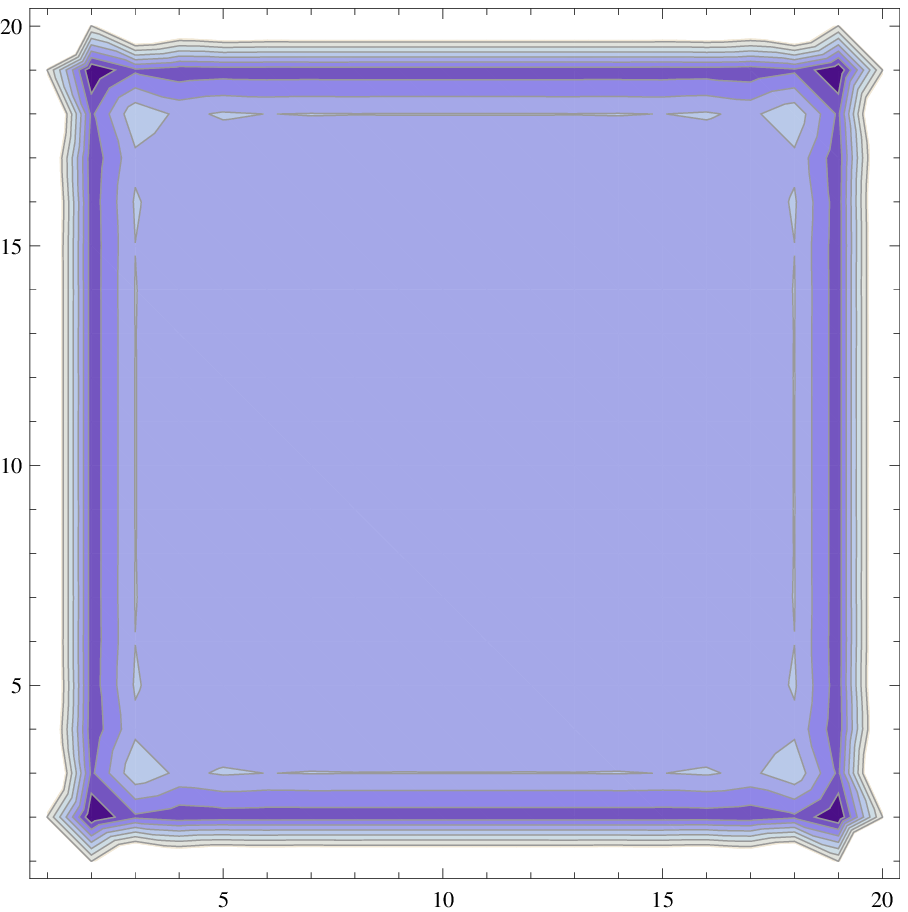}
\hspace{0.2cm}
\raisebox{1cm}{\includegraphics[height=4cm]{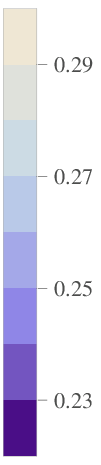}}
\caption{A contour plot of exact monopole densities as a function of location in a $20 \times 20$ grid for $a=b=z=1$. Note the uniformity of the density in the interior of the system.
\label{fig:mondens}}
\end{center}
\end{figure}

As expected, the prefactor of $z^2$ ensures that there is one additional monopole when either $m$ or $n$ is even. 
See, for example, Figure~\ref{fig:mondens} for the density plot when $m=n=20$. 
One can also compute joint correlations of monopoles. For instance, the two-point correlation of monopoles at positions $(c,d)$ and $(e,f)$ far apart is given by
\[
\det \left( \begin{matrix}
K^{-1}_{(c,d),(c,d)} & 
K^{-1}_{(c,d),(e,f)}\\
K^{-1}_{(e,f),(c,d)} & 
K^{-1}_{(e,f),(e,f)}
\end{matrix} \right).
\]

\section{Discussion on Asymptotic Behaviour} 
\label{sec:asymp}
We will now focus on asymptotic results for the monopole-dimer model on the grid graph.
The results in this section will be less formal and will focus more on obtaining rough estimates for the equivalent of quantities in standard thermodynamics, such as the free energy, density and the entropy.	Part of the reason for the informality of this section is that we are manipulating $Z_{m,n}$ as if it were the standard partition function in statistical physics. This is not strictly allowed because our partition functions are signed sums. However, as we shall see, we can justify this {\em a posteriori} by showing that the results are sensible.

Just as at the end of the previous section, $m,n$ are assumed to be even for simplicity. 
The {\bf free energy} is then given by
\[
F(a,b,z) = \lim_{m,n \to \infty} \frac 1{mn} \ln Z_{{m,n}}.
\]
Using \eqref{partfn-grid}, one can treat the right hand side as a Riemann sum, which tends to the limit
\begin{align*}
F(a,b,z) = \frac{2}{\pi^2} \int_0^{\pi/2} \text{d} \theta & \int_0^{\pi/2} 
\text{d} \phi \; \ln \left( z^2 + 4 a^2 \cos^2 \theta + 4 b^2 \cos^2 \phi \right).
\end{align*}
Following standard thermodynamic relations, the density of $a$-type of dimers (and similarly, the $b$-type) and that of monopoles is given, after differentiating under the integral sign, by
\begin{align}
\rho_{a} &= a \frac{\partial}{\partial a} F(a,b,z) = \frac{2}{\pi^2} \int_0^{\pi/2} \text{d} \theta & \int_0^{\pi/2} 
\text{d} \phi \; \frac{8a^2 \cos^2 \theta}{z^2 + 4 a^2 \cos^2 \theta + 4 b^2 \cos^2 \phi}
, \\
\rho_z &= z \frac{\partial}{\partial z} F(a,b,z) = \frac{2}{\pi^2} \int_0^{\pi/2} \text{d} \theta & \int_0^{\pi/2} 
\text{d} \phi \; \frac{2z^2}{z^2 + 4 a^2 \cos^2 \theta + 4 b^2 \cos^2 \phi}.
\end{align}
It is easy to see that
\(
\rho_a + \rho_b + \rho_z = 1.
\)
This is to be expected since each vertex either contains a monopole or is part of a loop adjacent to either an $a$ or a $b$ dimer. 

One of the integrals in each case is easily done. Surprisingly, $\rho_a$ is easier to evaluate than $\rho_z$ even though the final formula will turn out to be simpler for the latter. 
\[
\rho_a = \frac{2}{\pi} \int_0^{\pi/2} \text{d} \theta 
\frac{4a^2 \cos^2 \theta}{\sqrt{(z^2 + 4 a^2 \cos^2 \theta )
(z^2 + 4 b^2 + 4 a^2 \cos^2 \theta )}}
\]
After the change of variables $t = (2a \cos \theta)^{-1}$, we obtain
\[
\rho_a = \frac{1}{\pi a y z} \int_{1/2a}^{\infty} \frac{\text{d}t}
{t\sqrt{(t^2 - \frac{1}{4 a^2})(t^2 + \frac{1}{z^2})(t^2 + \frac{1}{y^2})}},
\]
where $y^2 = z^2 + 4 b^2$. Using a known formula for elliptic integrals \cite{gradry2000}[(3.137), Formula 8] and an amazing transformation \cite{absteg1964}[Formula 17.7.14], it turns out that $\rho_a$ can be concisely expressed in terms of a single known special function, {\em the Heuman Lambda function} $\Lambda_0(\theta,k)$, defined in \cite[Formula 17.4.39]{absteg1964}, as
\be \label{rhoa}
\rho_a =  1- \Lambda_0 (\theta_a,k),
\ee
where all the complexity has been absorbed in the parameters
\be
\theta_a = \tan^{-1} \left( \sqrt{\frac{4b^2+z^2}{4a^2}} \right), \; \text{ and }
k = \frac{4ab}{\sqrt{(4a^2+z^2)(4b^2+z^2)}},
\ee
where $k$ is the standard notation for the elliptic modulus.
We remark that the Heuman Lambda function is an elliptic function related to the Jacobi Zeta function and has come up in various physical problems.
It turns out that the monopole density can be written, using a miraculous addition formula for $\Lambda$ \cite[Formula 153.01]{byrd1971} as
\be \label{mondens}
\rho_z = \frac{K(k)\; k \; z^2}{2 \, \pi \, a \, b },
\ee
where $K(k)$ is the complete elliptic integral of the first kind.
Now that we have expressions for $\rho_a$ and $\rho_z$, we would like to obtain a 
simple expression for the free energy $F(a,b,z)$ by integrating $\frac{\rho_z}z$. 

Starting with the series expansion for $K(k)$ in \cite[Formula 17.3.11]{absteg1964}, we get
\be \label{rhoz-sum}
\rho_z = \frac{z^2}{\sqrt{(4a^2+z^2)(4b^2+z^2)}} \sum_{j=0}^\infty 
\left(\frac{(2j-1)!!}{2^j j!}\right)^2 \left( \frac{16 a^2 b^2 }{(4a^2+z^2)(4b^2+z^2)}
\right)^j.
\ee
We now integrate $\frac{\rho_z}{z}$ term by term assuming $a > b$ and obtain, using a standard computer algebra package, an infinite sum involving $2F1-$hypergeometric functions, 
\be \label{fe-a>b}
\begin{split}
F(a,b,z) = \sqrt{\frac{4b^2+z^2}{a^2-b^2}} \; & \sum_{j=0}^\infty
\; \frac{1}{2j-1} \;\binom{2j}j^2
\\ & \times
\left( \frac{a^2b^2}{(a^2-b^2)(4b^2+z^2)} \right)^{j}
\pFq{2}{1}{\frac 12-j,\frac 12+j}{\frac 32-j}{\frac{4b^2+z^2}{4b^2-4a^2} }
\end{split}
\ee
Since the integrands are symmetric in $a$ and $b$, we can obtain the free energy when $b>a$ by interchanging $a$ and $b$ in \eqref{fe-a>b}.
We handle the $a=b$ case separately. 
In particular, we set them equal to 1 without loss of generality.
In that case, each integral in \eqref{rhoz-sum} is easier because of the absence of square roots and it turns out that we can write $F(1,1,z)$ again using a computer algebra package as
\be\label{fe-ab=1}
F(1,1,z) = \frac{1}{2} \ln(4 + z^2) - \frac{1}{(4+z^2)^2} \; 
\pFq{4}{3}{1,1,\frac 32,\frac 32}{2,2,2}{\frac{16}{(4+z^2)^2}}
\ee

\begin{figure}[htp!]
\begin{center}
\includegraphics[height=5cm]{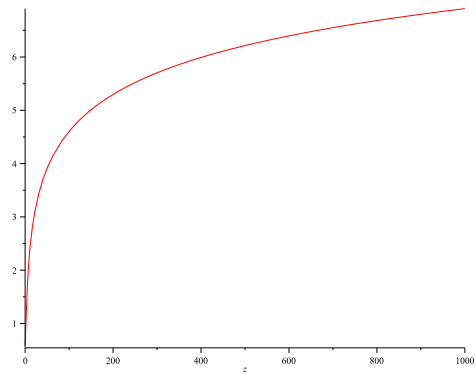}
\caption{A plot of $F(1,1,z)$ for $z$ varying between 0 and 1000.
\label{fig:fe}}
\end{center}
\end{figure}

As expected from general statistical physical considerations, 
$F(1,1,z)$ grows monotonically in $z$  and is concave 
\cite{hammersley1966} as seen in Figure~\ref{fig:fe}.
In accordance with the intuition developed for the monomer-dimer model \cite{gruberkunz1971,heilmann1972}, $F(1,1,z)$ is smooth and there are no phase transitions.
One can verify that $F(1,1,0) = 2G/\pi$, where $G$ is Catalan's constant. This is expected since this model reduces, when $z=0$, to the double-dimer model, which is the square of the dimer model \cite{kasteleyn1961,fisher1961}.

Since configurations of the monopole-dimer model are superpositions of two dimer model configurations with fixed monopole locations, and since $Z_{m,n}$ turns out to be a perfect square, we can compare $\sqrt{Z_{m,n}}$ with existing literature on the monomer-dimer model.
$F(1,1,z)/2$ compares favourably with rigorous bounds for the free energy of the classical monomer-dimer model in the literature, although it is not very close to numerical data; see Figure~\ref{fig:fe-comparison}. Note that the plot here is as a function of dimer density $\rho = \rho_a + \rho_b$, not $z$. The transformation is a classic exercise in demonstrating equivalence of ensembles. See \cite[Appendix A]{kong2006b} for example.

\begin{figure}[htp!]
\begin{center}
\includegraphics[height=5cm]{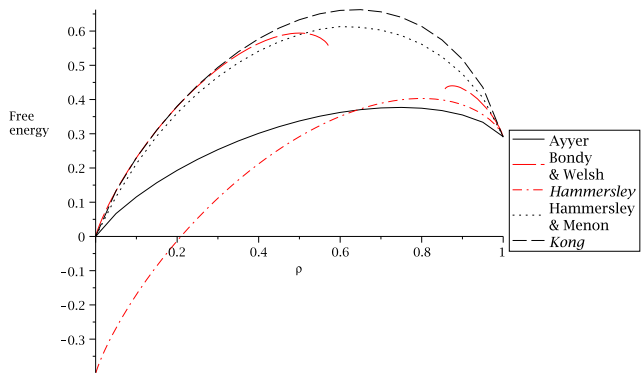}
\caption{Comparison of $F(1,1,z)/2$ with bounds obtained in \cite{bondy1966,hammersley1966,hammersley1970} and recent numerics in \cite{kong2006b}.
\label{fig:fe-comparison}}
\end{center}
\end{figure}

One can also calculate the {\bf entropy} using standard thermodynamic relations,
\begin{align*}
S(a,b,z) &= F(a,b,z) - z \ln z \frac{\partial}{\partial z} F(a,b,z)\\
&= F(a,b,z) - \rho_z \ln z .
\end{align*}
In the special case of equal dimer weights, this leads to
\[
S(1,1,z) = F(1,1,z) - \frac{2z^2 \ln z}{\pi(4+z^2)} 
K \left( \frac{4}{4+z^2} \right).
\]
Using \eqref{fe-ab=1}, one can show that the entropy is maximum when $z=1$, at which point the monopole density using \eqref{mondens} is 
\[
\rho_z = \frac{2}{5 \pi} K \left(\frac{4}{5} \right) \approx 0.25404.
\]
This compares very well with the exact result for the $20 \times 20$ grid in Figure~\ref{fig:mondens}.

Many qualitative properties of the monopole-dimer model on grids are similar to those of the classical monomer-dimer model, which is of much interest to scientists in various fields. The exact formulas for grids presented here might be used to gain further insight about the monomer-dimer model. 
The determinantal character of the partition function for the loop-vertex model on general graphs and the monopole-dimer model on planar graphs might also prove useful in other contexts.

\bibliographystyle{alpha}
\bibliography{dimermodels}

\end{document}